\newtheorem{proposition}{\bf Proposition}
\newtheorem{definition}{\bf Definition}
\newtheorem{remark}{Remark}
\newcommand{\ve}[1]{\boldsymbol{#1}}
\newcommand{\cm}[1]{\textcolor{black}{#1}}
\newcommand{\argmax}{\operatornamewithlimits{argmax}}
\newcommand{\argmin}{\operatornamewithlimits{argmin}}
\newcolumntype{I}{!{\vrule width 1.2pt}}
\def\hlinewd#1{%
\noalign{\ifnum0=`}\fi\hrule \@height #1 %
\futurelet\reserved@a\@xhline}
\let\ps@plain\ps@fancy
\begin{document}
\title{Game-theoretic Demand-side Management Robust to Non-Ideal Consumer Behavior in Smart Grid}

\author{\IEEEauthorblockN{Chathurika P. Mediwaththe}
\IEEEauthorblockA{University of New South Wales and NICTA, Australia\\
Email: Chathurika.Mediwaththe@nicta.com.au}
\and
\IEEEauthorblockN{David B. Smith}
\IEEEauthorblockA{NICTA and Australian National University, Australia\\
Email: David.Smith@nicta.com.au}}

\maketitle
\thispagestyle{fancy} 

\begin{abstract}
This paper investigates effects of realistic, non-ideal, decisions of energy users as to whether to participate in an energy trading system proposed for demand-side management of a residential community. The energy trading system adopts a non-cooperative Stackelberg game between a community energy storage (CES) device and users with rooftop photovoltaic panels where the CES operator is the leader and the users are the followers. Participating users determine their optimal energy trading starting time to minimize their personal daily energy costs while subjectively viewing their opponents' actions. Following a non-cooperative game, we study the subjective behavior of users when they decide on energy trading starting time using prospect theory. We show that depending on the decisions of participating-time, the proposed energy trading system has a unique Stackelberg equilibrium at which the CES operator maximizes their revenue while users minimize their personal energy costs attaining a Nash equilibrium. Simulation results confirm that the benefits of the energy trading system are robust to decisions of participating-time that significantly deviate from complete rationality.
\end{abstract}

\IEEEpeerreviewmaketitle

\section{Introduction}
Demand-side management helps utilities to regulate increasing energy demand by utilizing existing power grid infrastructure. Recent efforts of demand-side management include load-shifting methods, load curtailing methods and energy conservation strategies \cite{MohensianGT}. Distributed energy resources such as energy storage devices and renewable energy resources provide vast opportunities for demand-side management by storing extra energy generated by renewable resources that can be dispatched to support peak energy demand.

In general, effectiveness of consumer-driven demand-side management methods depends on active participation of users. However, in the long run, users may change their participating behavior leading to unexpected outcomes such as lower peak energy reduction and economic benefits. Therefore, designing successful demand-side management approaches have often been challenging with volatile user behavior \cite{VolUserBehave}.

In this paper, we investigate impacts of realistic energy user behavior, which is not completely rational, on a decentralized energy trading system proposed to regulate electricity demand of a residential community. In the energy trading system, users with photovoltaic (PV) energy generation can decide to participate across time to trade energy with a community energy storage (CES) device. First, we elaborate a non-cooperative Stackelberg game to study the energy trading between the CES operator and participating users where the CES operator acts as the leader and the users are their followers. Then we develop another non-cooperative game between users to explore their behavior in determining optimal energy trading starting times that minimize personal daily energy costs under two different user-behavioral models: expected utility theory and prospect theory. The contributions of this work are:
\begin{itemize}
\item With time-varying subsets of active participating users that depend on their decisions of participating-time, the energy trading system attains a unique Stackelberg equilibrium across time where the CES operator maximizes revenue while users minimize energy costs.
\item Benefits of the energy trading system are robust to users' participating-time strategies that significantly deviate from complete rationality.
\end{itemize}

Game-theoretic demand-side management methods have been widely investigated in literature \cite{Quanyan,MohensianGT,Atzeni, Hung,Adika,Peng}. These studies assume that users act rationally and ideally obeying the strategies predicted by game-theoretic systems. However, social studies have proved that the rationality assumption of game theory can be violated in real world when users face uncertainty in decision making \cite{Kahneman}. Abundant research using prospect theory has shown how real life user behavior contravenes the conventional game theoretic rationality assumption \cite{Wang2,Wang3}. In \cite{Wang}, a prospect theoretic study for a load-shifting approach showed that deviations of users' decisions to participate from conventional game-theoretic decisions result in significantly different outcomes. In contrast to \cite{Wang}, we apply prospect theory to study users' behavior of choosing to participate across time in a Stackelberg game-theoretic energy trading system that does not intend to shift regular energy consumption of users. In this regard, we show that the outcomes of the energy trading system are indistinguishable under both prospect theory and expected utility theory, even though users' decisions to choose to participate differ between the two models. The Stackelberg game-theoretic energy trading system between a CES device and users in \cite{TCNS} assumes users participate from the beginning of day and hence the number of users remain consistent over time. Here, we extend the Stackelberg energy trading system to study users' decisions of selecting energy trading starting times incorporating prospect theory. The CES-user Stackelberg game in this paper differs from that in \cite{TCNS} because the number of active participating users is time-variant depending upon each user's decision of choosing an energy trading starting time.

\section{System configuration}\label{Sec1}
\subsection{Demand-side model}
The community consists of two types of energy users: participating users $\mathcal{P}~(|\mathcal{P}|=I)$ and non-participating users $\mathcal{N}~(|\mathcal{N}|=N)$. The users $\mathcal{P}$ have rooftop PV panels and they are the players in the energy trading optimization who trade energy with the grid and the CES device. The users $\mathcal{N}$ are conventional grid users without behind-the-meter energy generation and are not players in the energy trading optimization. Depending on net PV energy after consuming, the users $\mathcal{P}$ are classified into surplus users $\mathcal{S}_t$ and deficit users $\mathcal{D}_t$ those are time-dependent. For the energy trading optimization, the entire control time period $M$, usually a day, is partitioned into $K$ number of equal time slots with granularity of $\Delta$. We assume that PV power generation and demand forecasts of the following day are available to the users $\mathcal{P}$ to decide their day-ahead energy trading strategies. If $g_{n,t}$ and $e_{n,t}$ are the PV energy and the regular energy demand of user $n\in \mathcal{P}$ at time $t\in M$, respectively, then they sell/buy energy amount $x_{n,t}$ to/from the CES device at time $t$ such that,
\begin{equation}
x_{n,t}=l_{n,t}+(g_{n,t}-e_{n,t}),\label{eq id:2}
\end{equation}
where $l_{n,t}$ is the grid energy consumption of the user. Note that $l_{n,t}>0$ when the user buys energy from the grid and $l_{n,t}<0$ when the user sells energy to the grid. If the surplus energy of the user $n$ is $s_{n,t}=g_{n,t}-e_{n,t}$, each user $i\in \mathcal{S}_t$ sells energy to the CES device and user $j\in \mathcal{D}_t$ buys energy from the CES device such that,
\begin{equation}
\begin{split}
0\leq x_{i,t}\leq s_{i,t},\\
s_{j,t}\leq x_{j,t}\leq 0.\label{eq id:4}
\end{split}
\end{equation}
\subsection{Energy Storage Model}
The CES operator trades $l_{Q,t}$ energy with the grid at each time $t$ where $l_{Q,t}>0~(<0)$ if the CES device is charged (discharged). Here, we use the same CES model given in \cite{TCNS} that is similar to the energy storage model in \cite{Atzeni}. In this regard, per-slot energy trading amounts are given as $x_{n,t}=x_{n,t}^+-x_{n,t}^-$ and $l_{Q,t}=l_{Q,t}^+-l_{Q,t}^-$ where $x_{n,t}^+$ and $l_{Q,t}^+$ are the per-slot charging energy amounts, and $x_{n,t}^-$ and $l_{Q,t}^-$ are the per-slot discharging energy amounts. We define a charging efficiency $0<\beta^+\leq1 $, a discharging efficiency $\beta^-\geq 1$ and a leakage rate $0<\tau\leq 1$ for the energy storage. Denoting $q_0$ is the charge level at the beginning of day, the energy capacity limit of the CES device gives,
\begin{equation}
\ve{0}\prec q_0 \ve{\kappa}+\ve{\Gamma} \left[\ve{\mathcal{L}^+},-\ve{\mathcal{L}^-}\right]\ve{\beta}\preceq \ve{B}, \label{eq:id6}
\end{equation}
where $\ve{B} \in \Re^{K\times 1}$ with elements of maximum energy capacity of the CES device $B$. $\ve{\kappa}\in \Re^{K\times 1}$ with elements $[\ve{\kappa}]_l=\tau^l$ and the $(l,m)$ entry of the lower triangular matrix $\ve{\Gamma} \in\Re^{K\times K}$ is $[\ve{\Gamma}]_{l,m}=\tau^{l-m}$. $\ve{\beta}=[\beta^+,\beta^-]^T$,~$\ve{0}$ is the $K\times 1$ zero matrix and $\ve{\mathcal{L}^+},\ve{\mathcal{L}^-}\in \Re^{K\times 1}$ with elements $\mathcal{L}^+=\sum_{n=1}^I(x_{n,t}^++l_{Q,t}^+),~\mathcal{L}^-=\sum_{n=1}^I(x_{n,t}^-+l_{Q,t}^-)$, respectively.

We define \eqref{eq:id7} to ensure the continuity of the CES device operation of the following day and to avert its over-charging or over-discharging across $M$ such that,
\begin{equation}
q_0=q_K, \label{eq:id7}
\end{equation}
where $q_K$ is the charge level at the end of day. Readers are referred to \cite{TCNS} for detailed description of the CES model.
\subsection{Energy cost models}
The pricing mechanism of the grid is similar to \cite{MohensianGT} and in particular the unit energy price at time $t$ depends on the total load on the grid at time $t$, $L_t = \sum_{n=1}^Il_{n,t}+l_{{\mathcal N},t}+l_{Q,t}$ where $l_{{\mathcal N},t}$ is the total grid load of the users $\mathcal{N}$ and . Then at time $t$, the unit energy price of the grid is,
\begin{equation}
p_t= \phi_tL_t+ \delta_t, \label{eq:id8}
\end{equation}
where $\phi_t>0$ and $\delta_t>0$. The CES operator also adopts a unit energy price $a_t$ for their energy transactions with the users $\mathcal{P}$ such that any user $n \in \mathcal{P}$ receives $a_tx_{n,t}$ from the CES operator for their selling energy $x_{n,t}$. Then the energy cost of the user $n\in \mathcal{P}$ at time $t$ is,
\begin{equation}
C_{n,t}=p_tl_{n,t}-a_tx_{n,t}. \label{eq:id9}
\end{equation}

The CES operator obtains a revenue from the energy trading with the users $\mathcal{P}$ and the grid that is given by,
\begin{equation}
R=\sum_{t=1}^K\big(-a_t\sum_{n\in \mathcal{P}}x_{n,t}-p_tl_{Q,t}\big). \label{eq:id10}
\end{equation}
Here, we assume that the energy trading between the CES operator and the grid uses the energy rate of the grid.

\section{Energy Trading Stackelberg game }\label{Sec2}
In the energy trading system, the CES operator maximizes their revenue in \eqref{eq:id10} by choosing optimal $a_t$ and $l_{Q,t}$. Following the strategies of the CES operator, each user $n\in \mathcal{P}$ is supposed to minimize their energy cost in \eqref{eq:id9} at each time $t\in M$ by determining optimal $x_{n,t}$. Based on contractual agreements with the system owners, the users $\mathcal{P}$ can individually choose a time $h_n\in \{1,2,\dotsm,K\}$ to start energy trading with the system such that their total daily energy cost is minimized (this process is explained in Section~\ref{Sec3}). After participating at $h_n$, they continue to trade energy for $h_n\leq t \leq K$. Given the opportunity to choose energy trading starting times, the number of active participating users at each time $t$ may not be uniform, and we denote the number of active participating users at time $t$ is $I_t=|\mathcal{P}_t|\leq I$ where $\mathcal{P}_t\subset \mathcal{P}$.

\subsection{Participating Users-Side Analysis}
Using the pricing signal $\ve{a} = [a_1,\dotsm,a_K]$ and the grid energy trading profile $\ve{l_Q}=[l_{Q,1},\dotsm,l_{Q,K}]$ broadcasted by the CES operator, the users $\mathcal{P}_t$ at each time $t\in [1,\dotsm,K]$ minimize their personal energy costs in \eqref{eq:id9}. Let us consider a single time slot $t$ where $I_t~ \geq 2\,$\footnote{\cm{$I_t=1$ implies that there is a single active user who minimizes their energy cost without a game among users $\mathcal{P}$.}}. Then for user $k\in \mathcal{P}_t$, the cost function \eqref{eq:id9} is quadratic with respect to $x_{k,t}$,
\begin{equation}
C_{k,t}=\omega_1x_{k,t}^2+\omega_2x_{k,t}+\omega_3, \label{eq:id11}
\end{equation}
where $\omega_1=\phi_t,~\omega_2=(\phi_t(L_{-k,t}-2s_{k,t})+\delta_t-a_t)$ and $\omega_3=(\phi_ts_{k,t}(s_{k,t}-L_{-k,t})-\delta_ts_{k,t})$ using \eqref{eq id:2} and \eqref{eq:id8}. Here, $L_{-k,t}$ is the total grid energy load at time $t$ excluding the load of the user $k$ and $L_{-k,t}=\sum_{k'\in {\mathcal{P}\backslash k}}l_{k',t}+l_{\mathcal{N},t}+l_{Q,t}$. Clearly, \eqref{eq:id11} is interdependent on each other's behavior and we study the energy trading coordination between the users $\mathcal{P}_t$ using a non-cooperative game $G \equiv \langle \mathcal{P}_t,\mathcal{X},\mathcal{C}\rangle$. Here, $\mathcal{X}=\{ \ve{X}_{1,t},\dotsm,\ve{X}_{k,t},\dotsm,\ve{X}_{{I^t},t}\}$ is the strategy set available to the users $\mathcal{P}_t$ and $\ve{X}_{k,t}$ is the strategy set of the user $k$ subject to \eqref{eq id:4}. $\mathcal{C}$ is the set of cost functions given by $\mathcal{C}=\{C_{1,t},\dotsm,C_{k,t},\dotsm,C_{I^t,t}\}$.

Each user $k\in \mathcal{P}_t$ determines the optimal energy trading amount from $\ve{X}_{k,t}$ such that their energy cost $C_k(x_{k,t},\ve{x}_{-k,t})\equiv C_{k,t}$ is minimized. Here, $\ve{x}_{-k,t}$ denotes the strategy profile of the opponents of the user $k$ that is given by $\ve{x}_{-k,t}=\{x_{1,t},\dotsm\,x_{k-1,t},x_{k+1,t}\dotsm,x_{{I^t},t}\}$. Then the optimization problem of each user $k\in \mathcal{P}_t$ is to find,
\begin{equation}
\tilde{x}_{k,t}=\argmin_{x_{k,t}\in \ve{X_{k,t}}}C_k(x_{k,t},\ve{x}_{-k,t}). \label{eq:id12}
\end{equation}

Note that the game $G$ is similar to the non-cooperative subgame between users in \cite{TCNS}. However, the subsets of players $\mathcal{P}_t$ are not uniform for the game $G$ played at each time $t\in M$ in contrast to \cite{TCNS}. Although the number of players is time-variant, using the same rationale in \cite{TCNS} we can prove that the game $G$ played at any particular time $t$ has a unique Nash equilibrium for any feasible $a_t$ and $l_{Q,t}$. At the Nash equilibrium of the game $G$, the optimal energy trading amount of the user $k$,~$\bar{x}_{k,t}$ can be found by setting the first derivative of \eqref{eq:id11} with respect to $x_{k,t}$ to zero that gives,
\begin{equation}
\frac{\partial C_{k,t}}{\partial x_{k,t}}=2\omega_1 \bar{x}_{k,t}+\omega_2=0. \label{eq:id13}
\end{equation}
Solving \eqref{eq:id13} for all users $\mathcal{P}_t$ simultaneously, we can obtain,
\begin{equation}
\bar{x}_{k,t}=s_{k,t}+\gamma_t, \label{eq:id13b}
\end{equation}
where $\gamma_t=(I_t+1)^{-1}({\phi_t}^{-1}(a_t-\delta_t)-l_{{\mathcal{N}},t}-l_{Q,t}).$

\subsection{CES operator-Side Analysis}
The CES operator also maximizes their revenue in \eqref{eq:id10} by determining optimal $\ve{a}$ and $\ve{l_Q}$. By substituting \eqref{eq:id13b} in \eqref{eq:id10}, we can write the objective of the CES operator as,
\begin{equation}
[\tilde{\ve{a}},\tilde{\ve{l_Q}}]=\argmax_{\ve{a},\ve{l_Q}\in\mathcal{Q}}{\sum_{t=1}^K (\lambda_1 a_t^2+\lambda_2 a_t+\lambda_3 l_{Q,t}^2+\lambda_4l_{Q,t})}, \label{eq:id14}
\end{equation}
where $\lambda_1=-I_t(I_t+1)^{-1}{\phi_t}^{-1}$, $\lambda_2=I_t(I_t+1)^{-1}(l_{\mathcal{N},t}+{\phi_t}^{-1}\delta_t)-\sum_{k=1}^{I_t}{s_{k,t}}$, $\lambda_3=-\phi_t(I_t+1)^{-1}$, and $\lambda_4 =-(I_t+1)^{-1}(\phi_t l_{\mathcal{N},t}+\delta_t)$. $\mathcal{Q}$ is the strategy set available to the operator subject to \eqref{eq:id6} and \eqref{eq:id7}. There is a unique solution for the objective function of the CES operator, since \eqref{eq:id14} is strictly concave because of the negative definite Hessian matrix with respect to all feasible $\ve{a},~\ve{l_Q}$ and the strategy set $\mathcal{Q}$ is convex due to linear constraints \eqref{eq:id6} and \eqref{eq:id7}.

\subsection{Stackelberg Equilibrium}
The CES operator first sets optimal $[\ve{a}, \ve{l_Q}]$ to maximize \eqref{eq:id10} and broadcasts them to the users $\mathcal{P}\equiv \{\mathcal{P}_1\cup\dotsm\cup\mathcal{P}_K\}$. Then the users $\mathcal{P}_t$ at each time $t\in M$ follow these signals to find optimal $x_{k,t}$ by playing the game $G$. We model this hierarchical interaction between the CES operator and the users $\mathcal{P}$ using a non-cooperative Stackelberg game $\Xi$. In the game $\Xi$, \textit{players} are the CES operator and the users $\mathcal{P}$ where the CES operator is the leader and the users $\mathcal{P}$ are the followers. As the \textit{strategies}, the CES operator determines $[\ve{a}, \ve{l_Q}]\in \mathcal{Q}$ to maximize \eqref{eq:id10} and at time $t$, user $k\in \mathcal{P}_t$ selects $x_{k,t}\in \ve{X}_{k,t}$ to minimize cost in \eqref{eq:id9}. The \textit{utilities} are as defined in \eqref{eq:id10} for the CES operator and \eqref{eq:id9} for the user $k\in \mathcal{P}_t$.
\begin{definition} Let $\ve{\hat\rho}\equiv[\ve{\hat a},\ve{\hat l_Q}]$ be the solution of \eqref{eq:id14} and $\ve{\hat X}\equiv \{ [\ve{{\hat x_1}}]^T,\dotsm,[\ve{\hat x_K}]^T\}$ where $\ve{\hat x_t}$ be the solution of the game $G$ at time $t\in M$. Then the point $[\ve{\hat \rho},\ve{\hat X}]$ is a Stackelberg equilibrium if and only if,
\begin{equation}
 \begin{split}
R(\ve{\hat {X}},\ve{\hat \rho}) \geq R(\ve{\hat X},\ve{\rho}),~\forall \ve{\rho} \in \mathcal{Q},
\end{split}\label{eq:id16}
\end{equation}
\begin{gather}
C_k(\ve{\hat x_t},~\ve{\hat\rho}) \leq C_k(x_{k,t},\ve{\hat x_{-k,t}}, \ve{\hat \rho}), \nonumber \\
\forall k\in \mathcal{P}_t,~\forall x_{k,t}\in \ve{X_{k,t}},~\forall t\in M.\label{eq:id15}
\end{gather}
\end{definition}

\begin{proposition}
The game $\Xi$ has a unique Stackelberg equilibrium.
\end{proposition}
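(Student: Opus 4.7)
The plan is to prove uniqueness by backward induction on the two-level Stackelberg structure: first pin down the followers' best-response correspondence as a single-valued map of the leader's action, then show the leader's problem composed with this map has a unique maximizer, and finally check that the resulting pair satisfies the defining inequalities of a Stackelberg equilibrium.

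First I would fix an arbitrary leader strategy $\ve{\rho}=[\ve{a},\ve{l_Q}]\in \mathcal{Q}$ and argue that at each time $t\in M$ the followers' subgame $G$ admits a unique Nash equilibrium, so that the composite follower response $\ve{\hat X}(\ve{\rho})=\{[\ve{\hat x_1}]^T,\dotsm,[\ve{\hat x_K}]^T\}$ is a single-valued function of $\ve{\rho}$. The per-slot uniqueness is already granted by the argument inherited from \cite{TCNS}: each cost $C_{k,t}$ in \eqref{eq:id11} is strictly convex in $x_{k,t}$ with coefficient $\omega_1=\phi_t>0$, the strategy sets in \eqref{eq id:4} are compact and convex, and the first-order conditions \eqref{eq:id13} yield the closed-form profile \eqref{eq:id13b}. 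The time-variance of the participant subset $\mathcal{P}_t$ is immaterial here, since the subgames at different time slots are independent once $\ve{\rho}$ is fixed, so uniqueness stitches together across $t$.

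Next I would insert $\ve{\hat X}(\ve{\rho})$ into the revenue $R$ of \eqref{eq:id10}, which by construction reduces the leader's problem to \eqref{eq:id14}. I would then invoke the strict concavity argument already made just below \eqref{eq:id14}: for each $t$ the block of the Hessian in the variables $(a_t,l_{Q,t})$ is diagonal with entries $2\lambda_1=-2I_t(I_t+1)^{-1}\phi_t^{-1}<0$ and $2\lambda_3=-2\phi_t(I_t+1)^{-1}<0$, so the full Hessian is block-diagonal and negative definite on $\mathcal{Q}$. Since $\mathcal{Q}$ is a polyhedron defined by the linear inequalities \eqref{eq:id6} and the linear equality \eqref{eq:id7}, it is convex and closed; together with strict concavity this yields a unique maximizer $\ve{\hat\rho}=[\ve{\hat a},\ve{\hat l_Q}]$.

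Finally I would verify the two Stackelberg inequalities. Inequality \eqref{eq:id16} is exactly the leader's maximization at $\ve{\hat\rho}$, while \eqref{eq:id15} is the per-slot Nash condition for $\ve{\hat X}(\ve{\hat\rho})$ proved in the first step. Uniqueness of the pair $[\ve{\hat\rho},\ve{\hat X}]$ follows because both the leader's optimum and the follower's response to it are unique. I expect the main obstacle to be conceptual rather than computational: one must be careful that the time-varying cardinality $I_t$ does not break strict concavity of \eqref{eq:id14} when $I_t=0$ or $I_t=1$, and that $\ve{\hat X}(\ve{\rho})$ varies continuously with $\ve{\rho}$ so that maximizing the composed objective is legitimate. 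Both concerns are handled by the explicit affine dependence of $\bar{x}_{k,t}$ on $(a_t,l_{Q,t})$ in \eqref{eq:id13b} and by treating the trivial single-user slot separately, as noted in the footnote accompanying the game $G$.
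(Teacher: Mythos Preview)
Your proposal is correct and mirrors the paper's own proof: both argue uniqueness by combining (i) the unique Nash equilibrium of the followers' game $G$ at each $t$ for any feasible $(a_t,l_{Q,t})$ with (ii) the unique maximizer of the leader's strictly concave reduced problem \eqref{eq:id14} over the convex set $\mathcal{Q}$. You simply fill in more of the details (explicit Hessian blocks, the backward-induction framing, and the $I_t\le 1$ edge cases) than the paper's terse proof, but the approach is the same.
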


\begin{proof}
The game $G$ played at any time $t$ has a unique Nash equilibrium for feasible $a_t$ and $l_{Q,t}$. Further, the revenue maximization of the CES operator in \eqref{eq:id14} has a unique solution. Hence, the game $\Xi$ converges to a unique Stackelberg equilibrium once the CES operator obtains optimal strategy $\ve{\hat \rho}$ while the users $\mathcal{P}$ attain their K-tuple of unique Nash equilibrium solutions $\ve{\hat X}$.
\end{proof}

\section{Participation-Time Selection Game}\label{Sec3}
Before the Stackelberg game in Section~\ref{Sec2} takes place, the users $\mathcal{P}$ individually select optimal times $h_n$ to start energy trading such that their total daily energy costs are minimized. Similar to \cite{Wang}, to explicitly study such user behavior with respect to choosing to participate across time in our system, we develop a non-cooperative game $\Gamma$ between the users $\mathcal{P}$ that has the strategic form $\Gamma =\langle \mathcal{P},\ve{H},\ve{U}\rangle$ and study it under expected utility theory and prospect theory. Here, $\ve{H}$ is the set of available strategies to the users $\mathcal{P}$ i.e., energy trading starting times. $\ve{H}=\{\ve{H}_n\}_{n\in \mathcal{P}}$ where $\ve{H}_n=\{1,\dotsm,K\}$; $\forall n\in \mathcal{P}$. $\ve{U}=\{U_n\}_{n\in \mathcal{P}}$ is the set of cost functions that captures the daily energy costs for each user $n\in \mathcal{P}$. Note that the Stackelberg equilibrium described in Section~\ref{Sec2} depends on temporal distribution of the users $\mathcal{P}_t$ which is a result of how the users $\mathcal{P}$ begin energy trading across time. At the Stackelberg equilibrium corresponding to an action (i.e., energy trading starting time) profile $\ve{h}=\{h_n,\ve{h}_{-n}\}=\{h_1,\dotsm,h_I\}$ of the users $\mathcal{P}$ where $h_n\in \ve{H}_n$, the daily energy cost of user $n$, $U_n$ is,
 \begin{equation}
U_n(\ve{h})=\sum_{t=1}^K\hat p_t(\ve{h})\hat l_{n,t}(\ve{h})-\hat a_t(\ve{h})\hat x_{n,t}(\ve{h}). \label{eq:id17}
\end{equation}
Here, $\hat p_t(\ve{h}),~\hat a_t(\ve{h}),~\hat l_{n,t}(\ve{h}),~\hat x_{n,t}(\ve{h})$ are the grid price, CES energy price, user $n$'s grid load and their CES energy trading amount at the Stackelberg equilibrium obtained for $\ve{h}$, respectively. Note that $\ve{h}_{-n}$ is the action profile of the users $\mathcal{P}$ except user $n$. In the game $\Gamma$, each user $n\in \mathcal{P}$ chooses an energy trading starting time $h_n$ for given $\ve{h}_{-n}$ such that their energy cost in \eqref{eq:id17} is minimized.

\begin{remark}
 After the users $\mathcal{P}$ decide to participate as per optimal energy trading starting times determined by playing the game $\Gamma$, the Stackelberg energy trading in Section~\ref{Sec2} takes place that ultimately achieves a Stackelberg equilibrium.
\end{remark}

 In the long run, the users $\mathcal{P}$ may change their behavior with respect to choosing an energy trading starting time. Hence, we investigate a solution for the non-cooperative game $\Gamma$ that captures empirical frequencies of actions followed by the users $\mathcal{P}$. The straightforward interpretation is that each user $n\in \mathcal{P}$ assigns a probability for each action in $\ve{H}_n$. In such a paradigm, users face uncertainty to make decisions and we characterize solutions for the game $\Gamma$ based on mixed strategies under two different user-behavioral models: expected utility theory and prospect theory.

 \subsection{Energy Trading Under Expected Utility Theory}
Under the notion of mixed strategies, each user $n\in \mathcal{P}$ determines the optimal probability distribution over the actions in $\ve{H}_n$ to minimize expected daily energy cost. Here, we explore how the users $\mathcal{P}$ decide probabilities of energy trading starting times according to expected utility theory assuming that all users make rational choices by objectively viewing their opponents' behavior. According to the theory, the expected daily energy cost of the user $n$ can be given as,
 \begin{equation}
E_n^{EUT}(\ve{y})=\sum_{\ve{h}\in \ve{H}}U_n(\ve{h})\prod_{r=1}^Iy_r(h_r), \label{eq:id18}
\end{equation}
where $\ve{y}=\{\ve{y}_n,\ve{y}_{-n}\}$,~$\ve{y}_n=[y_n(1),\dotsm,y_n(K)]$ and $y_n(h_n)$ is the probability of choosing $h_n$ by the user $n$. $\ve{y}_{-n}$ is the probabilities of the users $\mathcal{P}$ except user $n$.

The intuition behind the cost in \eqref{eq:id18} relies on the assumption that the user $n$ assesses their neighbours' empirical frequencies of actions identical to their objective probabilities of choosing actions. However, this generalization may not be valid in the real world as people overweight outcomes with low probabilities and underweight outcomes with high probabilities. These observations are clearly explained under prospect theory \cite{Kahneman}.

\subsection{Energy Trading under Prospect Theory}
In practice, the users $\mathcal{P}$ may subjectively evaluate their neighbors' actions to minimize energy costs. This characteristic is more realistic than assuming users act rationally and perceive their neighbors' behavior objectively \cite{Kahneman}. In this regard, we study actual user behavior as to when they select their energy trading starting time using prospect theory.

To this end, probability weighting functions are used to model the subjective behavior of users when they make decisions under risk and uncertainty. In this regard, the probability weighting function $w_n(y)$ implies the subjective evaluation of the user $n$ about an outcome with $y$ probability. We use the Prelec function \cite{prelec} to model the subjective perceptions of users on each other's behavior that is given by,
 \begin{equation}
w_n(y)=\exp(-(-\ln y)^{\alpha_n});~0<\alpha_n\leq1. \label{eq:id19}
\end{equation}
Here, $\alpha_n$ is a parameter that decreases as the user's subjective evaluation deviates from the objective probability. If the user's subjective and objective probabilities are equal, then $\alpha_n=1$ and this corresponds to expected utility theory. Assuming that the subjective probabilities of user $n\in \mathcal{P}$ about their own actions are equal to their objective probabilities, the expected daily energy cost of user $n$ under prospect theory is,
 \begin{equation}
 E_n^{PT}(\ve{y})=\sum_{\ve{h}\in \ve{H}}U_n(\ve{h})y_n(h_n)\big(\prod_{r\in \mathcal{P}\backslash n}^{I-1}w_n(y_r(h_r))\big).\label{eq:id20}
\end{equation}

\subsection{$\epsilon$-Nash Equilibria}
After defining the expected daily costs of the users $\mathcal{P}$, we now analyze the solutions for the game $\Gamma$ played under expected utility theory and prospect theory. Due to computational usefulness \cite{gametheoryessentials}, here we study the existence of $\epsilon-$Nash equilibria. For the game $\Gamma$, a mixed strategy profile $\ve{y^*}\equiv \{\ve{y}_n^*,\ve{y}_{-n}^*\}$ is an $\epsilon-$Nash equilibrium if it satisfies, \begin{equation}
E_n(\ve{y}_n^*,\ve{y}_{-n}^*)\leq E_n(\ve{y}_n,\ve{y}_{-n}^*)+\epsilon;~\forall \ve{y}_n\in \mathcal{Y}_n, \forall n\in \mathcal{P},\label{eq:id21}
\end{equation}
where $\mathcal{Y}_n$ is the set of all mixed strategy profiles over $\ve{H_n}$ and $\epsilon>0$. In general, $\epsilon-$Nash equilibria always exist \cite{gametheoryessentials} and for the game $\Gamma$, we are interested to find $\epsilon$-Nash equilibrium located close to a mixed strategy Nash equilibrium under both expected utility theory and prospect theory. We use the iterative algorithm proposed in \cite{Wang} that was proved to converge to an $\epsilon$-Nash equilibrium close to a mixed strategy Nash equilibrium under both expected utility theory and prospect theory. In summary, the algorithm is given by,
\begin{equation}
\ve{y}_n^{(i+1)}=\ve{y}_n^{(i)}+\frac{\eta}{i}(\ve{v}_n^{(i)}-\ve{y}_n^{(i)}),\label{eq:id22}
\end{equation}
where $i$ is the iteration number,~$0<\eta<1$ is the inertia weight. $\ve{v}_n^{(i)}=\{v_n^{(i)}(h_{n,1}),\dotsm,v_n^{(i)}(h_{n,K})\}$ of which,
\begin{equation}
v_n^i(h_{n,t})=\begin{cases}
                 1,~\text{if}~h_{n,t}=\operatornamewithlimits{argmin}\limits_{h_n\in \ve{H}_n}e_n(h_n,\ve{y}_{-n}^{(i-1)}), \\
                 0,~\text{otherwise},
                   \end{cases}\label{eq:id23}
\end{equation}
where $e_n(h_n,\ve{y}_{-n}^{(i-1)})$ is the expected cost when the user $n$ selects the pure strategy $h_n$ in response to the mixed strategies of other players at iteration $(i-1)$ i.e., $\ve{y}_{-n}^{(i-1)}$. Note that for prospect theory, $\ve{y}_{-n}^{(i-1)}$ considers the weighted probabilities of other users' mixed strategies at $(i-1)$.
\begin{remark}
As the algorithm converges, $\epsilon$-Nash equilibrium with respect to strategy profile $\ve{y}$ is obtained under both expected utility theory and prospect theory.
\end{remark}

Given the equilibrium probabilities of participating-time decisions of the users $\mathcal{P}$, we can define the expected revenue of the CES operator under both prospect theory and expected utility theory. In this regard, if $\ve{y}_{EUT}^*$ and $\ve{y}_{PT}^*$ are the $\epsilon$-Nash equilibriums under expected utility theory and prospect theory, respectively, then the subsequent expected daily CES revenue $W$ in each case can be obtained by,
\begin{equation}
W=\sum_{\ve{h}\in\ve{H}}R(\ve{h})\prod_{r=1}^Iy_r^*(h_r), \label{eq:id24}
\end{equation}
where $R(\ve{h})$ is the CES revenue as per \eqref{eq:id10} at the Stackelberg equilibrium corresponds to $\ve{h}$, $y_r^*(h_r)\in \ve{y}_{EUT}^*$ for expected utility theory and $y_r^*(h_r)\in \ve{y}_{PT}^*$ for prospect theory.

\section{Simulation Results}\label{Sec4}
In simulations, we consider real data of average PV power and user demand of the Western Power Network in Australia on a summer day \cite{WPNdata} (see Fig.~\ref{fig:PVdemand}) and we assume that all users have power profiles same to these average profiles. Further, $K=24$, $\Delta=1 \text{h}$, $B=80$ kWh,  $q_0=20$ kWh, $\tau=0.9^{(1/48)}$, $\beta^+=0.9$ and $\beta^-=1.1$ \cite{Atzeni}. Peak hours of the grid are between 16.00 and 23.00 and we select $\phi_t$ such that $\phi_{\text{peak}}=1.5~\phi_{\text{off-peak}}$. We choose $\phi_{\text{peak}}$ such that the predicted grid price range is same to the reference time-of-use price range in \cite{ausgrid} and $\delta_t$ is set to a constant such that the average predicted grid price is equal to the average reference price. The community has 10 households where 6 users are participating users $\mathcal{P}$ in the system. The allowable energy trading starting times for the users $\mathcal{P}$ are 01.00, 12.00 and 17.00 so that $\ve{H}_n=\{1,12,17\}$. For comparisons, we use a baseline without a CES device where the users $\mathcal{P}$ trade energy directly with the grid that uses the same energy cost model. For the algorithm, we use $\ve{y}_n^{(0)}=[0.3,0.3,0.4];~\forall n\in \mathcal{P}$ and $\eta = 0.7$.

\begin{figure}[t!]
\centering
\includegraphics[width=0.79\columnwidth]{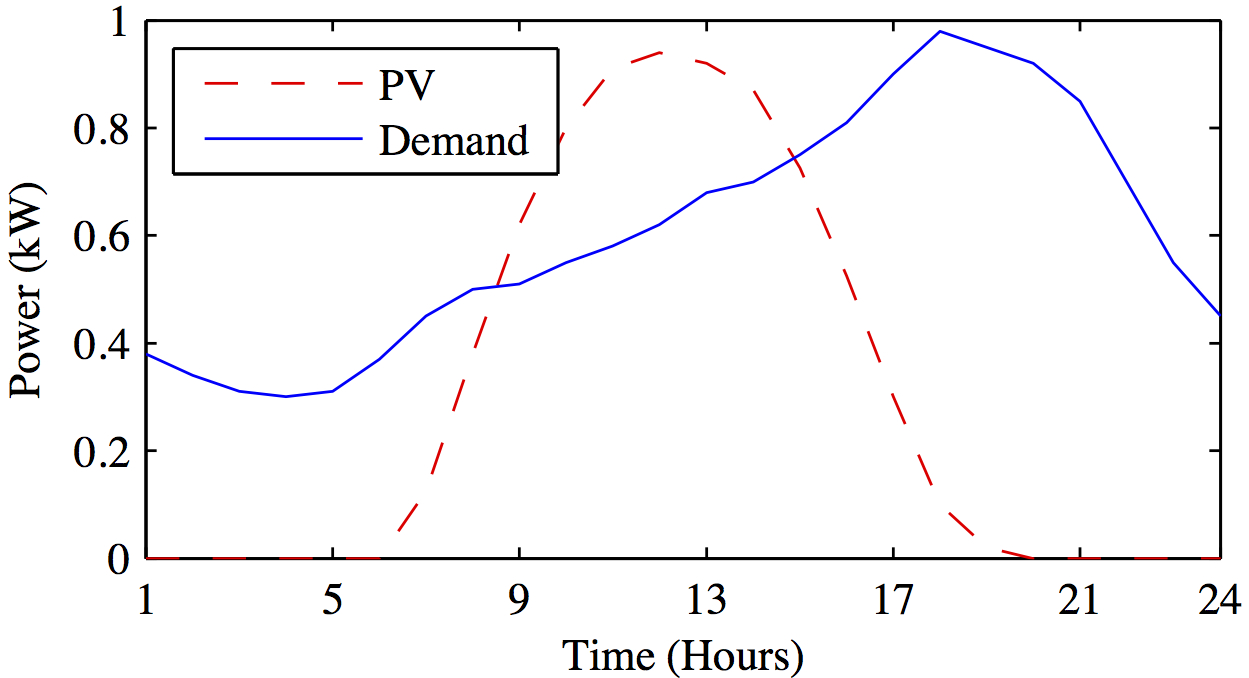}
\caption{Average PV power and user electricity demand.}
\label{fig:PVdemand}
\end{figure}

Fig.~\ref{fig:ExpCostSav} illustrates expected cost savings of the users $\mathcal{P}$ under expected utility theory, and under prospect theory for three different $\alpha \in (0,1]~(\text{i.e.,}~0.7,~0.4~\text{and}~0.1)$\footnote{From \eqref{eq:id19}, as $\alpha$ tends to $0$ users become more subjective deviating from the objective evaluation assumption in expected utility theory.} assuming $\alpha_n=\alpha;~\forall n\in \mathcal{P}$. Here, cost savings are calculated compared to the baseline. When $\alpha = 0.7$, and even when $\alpha=0.4$ with significant non-ideal behavior, the expected cost savings remained almost 28\% under both models because for all users, participation probabilities at each time in $\ve{H}_n$ using prospect theory do not significantly deviate from those obtained under expected utility theory as shown in Table~\ref{table 1}. When $\alpha = 0.1$, the participation probabilities at $h_n=1$ are significantly increased for the fourth and fifth users compared to those predicted using expected utility theory (see Table~\ref{table 1}). As a result, the expected cost savings reduced from 28\% to 21.5\% for all users.

\begin{figure}[t!]
\centering
\includegraphics[width=0.79\columnwidth]{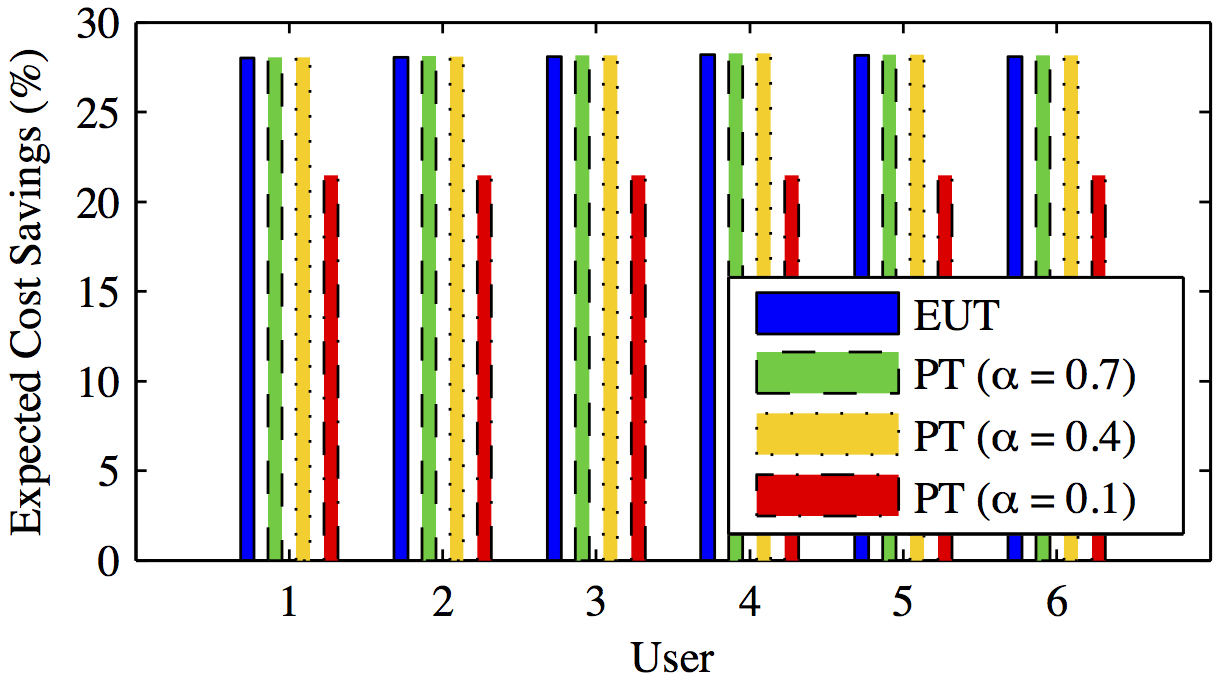}
\caption{Expected user cost savings under expected utility theory (EUT) and prospect theory (PT).}
\label{fig:ExpCostSav}
\end{figure}

\begin{figure}[b!]
\centering
\includegraphics[width=0.79\columnwidth]{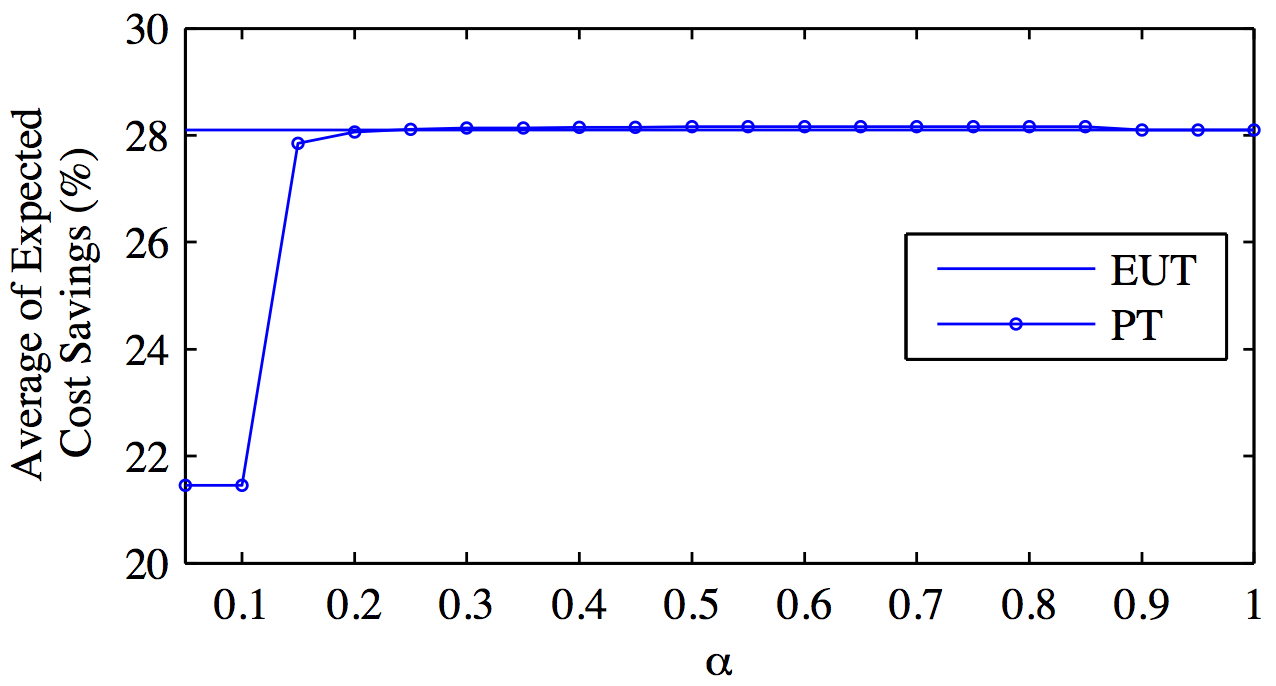}
\caption{Average of expected user cost savings with different $\alpha$.}
\label{fig:AvgExpCostSav}
\end{figure}

\begin{figure}[bt!]
\centering
\includegraphics[width=0.79\columnwidth]{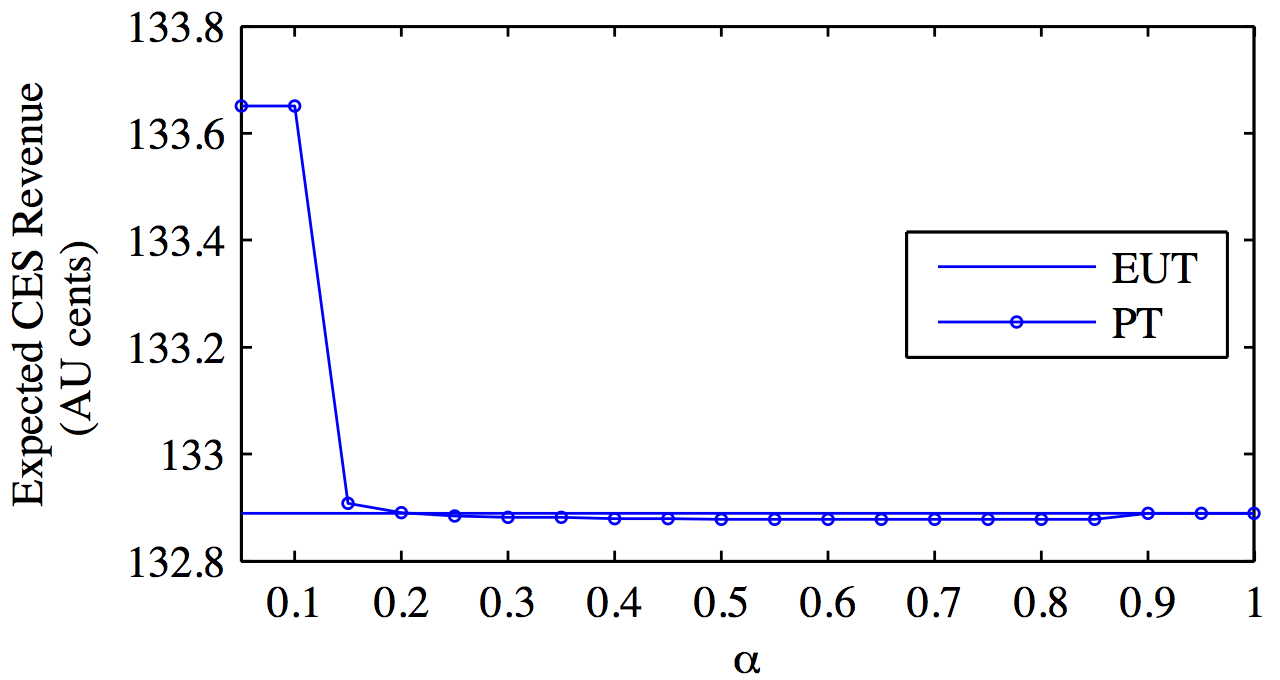}
\caption{Expected CES operator revenue with different $\alpha$. }
\label{fig:CES Rev}
\end{figure}

\begin{figure}[bt!]
\centering
\includegraphics[width=0.79\columnwidth]{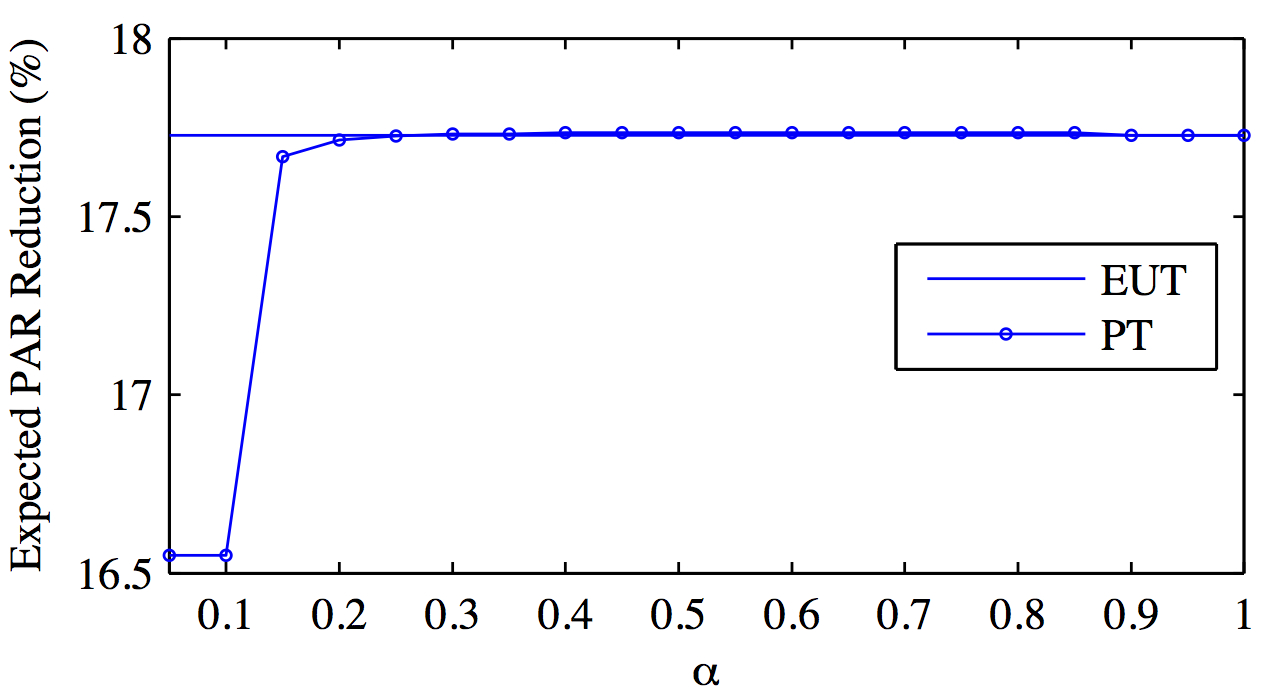}
\caption{Expected peak-to-average ratio (PAR) reduction with different $\alpha$. }
\label{fig:PARred}
\end{figure}

\begin{table*}[bt!]
\renewcommand{\arraystretch}{1.3}
\caption{Participation Probabilities of users $\mathcal{P}$ for $h_n\in \ve{H}_n\equiv \{1,12,17\}$ when $\alpha = 0.7,~0.4,~0.1$}
\label{table 1}
\centering
\begin{tabular}{|c|c|c|cIc|c|cIc|c|cIc|c|c|}
\hline
  {}&\multicolumn{3}{cI}{EUT}&\multicolumn{3}{cI}{PT $(\alpha=0.7)$}&\multicolumn{3}{cI}{PT $(\alpha=0.4)$}&\multicolumn{3}{c|}{PT $(\alpha=0.1)$} \\
  \cline{2-13}
 User&$h_n$~=~1&$h_n$~=~12&$h_n$~=~17&$h_n$~=~1&$h_n$~=~12&$h_n$~=~17&$h_n$~=~1&$h_n$~=~12&$h_n$~=~17&$h_n$~=~1&$h_n$~=~12&$h_n$~=~17\\
\hline
1 & 0.9966&0.0005&0.0029 & 0.9988&0.0005&0.0007 & 0.9989&0.0005&0.0006 & 0.9979&0.0009&0.0012\\
\hline
2 & 0.9966&0.0005&0.0029 & 0.9988&0.0005&0.0007 & 0.9989&0.0005&0.0006 & 0.9979&0.0009&0.0012\\
\hline
3 & 0.9966&0.0005&0.0029 & 0.9988&0.0005&0.0007 & 0.9989&0.0005&0.0006 & 0.9979&0.0009&0.0012\\
\hline
4 & 0.0070&0.9924&0.0006 &  0.0076&0.9918&0.0006 & 0.0070&0.9924&0.0006 & 0.9979&0.0009&0.0012\\
\hline
5 & 0.0070&0.0005&0.9925 & 0.0076&0.0005&0.9919 & 0.0095&0.0005&0.9900 & 0.9979&0.0009&0.0012\\
\hline
6 & 0.9966&0.0005&0.0029 & 0.9988&0.0005&0.0007 & 0.9989&0.0005&0.0006 & 0.9979&0.0009&0.0012\\
\hline
\end{tabular}
\end{table*}

Fig.~\ref{fig:AvgExpCostSav}, Fig.~\ref{fig:CES Rev} and Fig.~\ref{fig:PARred} depict the variations in different aspects of system performance across the range of possible $\alpha$ values. Here, larger $\alpha$ tending to 1 reflects that the users behave closer to the rationality assumption in expected utility theory, and smaller $\alpha$ tending to 0 implies that their evaluations of opponents' actions are more distorted from that of expected utility theory. Fig.~\ref{fig:AvgExpCostSav} shows that under expect utility theory, the average of expected cost savings of the users achieved by participating in the system is 28.1\%. On the other hand, even if the users' weighting effects on their opponents' actions are getting larger, i.e., when $\alpha$ is getting smaller, the expected cost savings will not significantly fluctuate and remain almost at 28\% except for $0<\alpha\leq0.1$. Fig.~\ref{fig:CES Rev} shows that, when $\alpha>0.15$, expected revenue for the CES operator retains nearly unchanged compared to the expected revenue calculated under expected utility theory. In terms of demand-side management of the grid, the expected peak-to-average ratio reduction compared to the baseline will not change notably from the peak-to-average ratio reduction predicted using expected utility theory when $\alpha>0.15$. This is because as shown in Table~\ref{table 1}, for $\alpha>0.15$, users' prospect theoretic probabilities of participation at each time remain almost the same as those in expected utility theory. When $0<\alpha\leq0.1$, the fourth and fifth users will more likely to start energy trading from the beginning under prospect theory, which is not the case under expected utility theory. However, this behavioral change will only reduce the expected peak-to-average ratio reduction from 17.7\% to 16.55\% (see Fig.~\ref{fig:PARred}).

\section{Conclusion}\label{Sec5}
In this paper, we have studied effects of realistic, non-ideal, behavior of users, with respect to choosing energy trading starting times, on a game-theoretic demand-side management energy trading system between a community energy storage (CES) device and users. First, we have developed the non-cooperative Stackelberg game to study the energy trading interaction between the users and the CES operator based on users' decisions as to whether to participate across time. Next we have studied a non-cooperative game to explore how the users make decisions to participate in the above energy trading system under two user-behavioral models: prospect theory and expected utility theory. Simulation results show that the benefits of the energy trading system are robust to users' strategies of participating-time that significantly deviate from complete rationality. We postulate that the energy trading system can be scaled to any number of participating users and present similar performance trends.


 \newcommand{\noop}[1]{}


\begin{thebibliography}{10}
\providecommand{\url}[1]{#1}
\csname url@samestyle\endcsname
\providecommand{\newblock}{\relax}
\providecommand{\bibinfo}[2]{#2}
\providecommand{\BIBentrySTDinterwordspacing}{\spaceskip=0pt\relax}
\providecommand{\BIBentryALTinterwordstretchfactor}{4}
\providecommand{\BIBentryALTinterwordspacing}{\spaceskip=\fontdimen2\font plus
\BIBentryALTinterwordstretchfactor\fontdimen3\font minus
  \fontdimen4\font\relax}
\providecommand{\BIBforeignlanguage}[2]{{%
\expandafter\ifx\csname l@#1\endcsname\relax
\typeout{** WARNING: IEEEtran.bst: No hyphenation pattern has been}%
\typeout{** loaded for the language `#1'. Using the pattern for}%
\typeout{** the default language instead.}%
\else
\language=\csname l@#1\endcsname
\fi
#2}}
\providecommand{\BIBdecl}{\relax}
\BIBdecl

\bibitem{MohensianGT}
A.~Mohsenian-Rad, V.~Wong, J.~Jatskevich, R.~Schober, and A.~Leon-Garcia,
  ``Autonomous demand-side management based on game-theoretic energy
  consumption scheduling for the future smart grid,'' \emph{IEEE Trans. Smart
  Grid}, vol.~1, no.~3, pp. 320--331, Dec. 2010.

\bibitem{VolUserBehave}
A.~Haney, T.~Jamasb, J.~Wu, L.~Platchkov, and M.~Pollitt, ``Demand-side
  management strategies and the residential sector: lessons from the
  international experience,'' in \emph{The Future of Electricity Demand},
  T.~Jamasb and M.~Pollitt, Eds.\hskip 1em plus 0.5em minus 0.4em\relax
  Cambridge: Cambridge University Press, 2011, ch.~14, pp. 337--378.

\bibitem{Quanyan}
Q.~Zhu, Z.~Han, and T.~Basar, ``A differential game approach to distributed
  demand side management in smart grid,'' in \emph{Proc. IEEE Int. Conf. on
  Communications (ICC)}, June 2012, pp. 3345--3350.

\bibitem{Atzeni}
I.~Atzeni, L.~Ordonez, G.~Scutari, D.~Palomar, and J.~Fonollosa, ``Demand-side
  management via distributed energy generation and storage optimization,''
  \emph{IEEE Trans. Smart Grid}, vol.~4, no.~2, pp. 866--876, June 2013.

\bibitem{Hung}
H.~K. Nguyen, J.~Song, and Z.~Han, ``Demand side management to reduce
  peak-to-average ratio using game theory in smart grid,'' in \emph{Proc. IEEE
  Int. Conf. Computer Commun. Workshops (INFOCOM WKSHPS)}, March 2012, pp.
  91--96.

\bibitem{Kahneman}
D.~Kahneman and A.~Tversky, ``Prospect theory: An analysis of decision under
  risk,'' \emph{Econometrica}, vol.~47, no.~2, pp. 263--292, Mar. 1979.

\bibitem{Wang2}
Y.~Wang, W.~Saad, N.~Mandayam, and H.~Poor, ``Integrating energy storage into
  the smart grid: A prospect theoretic approach,'' in \emph{Proc. IEEE Int.
  Conf. on Acoustics, Speech and Signal Processing (ICASSP)}, May 2014, pp.
  7779--7783.

\bibitem{Wang3}
Y.~Wang and W.~Saad, ``On the role of utility framing in smart grid energy
  storage management,'' in \emph{Proc. IEEE Int. Conf. on Commun. Workshop
  (ICCW)}, June 2015, pp. 1946--1951.

\bibitem{Wang}
\BIBentryALTinterwordspacing
Y.~Wang, W.~Saad, N.~B. Mandayam, and H.~V. Poor, ``Load shifting in the smart
  grid: To participate or not?'' \emph{CoRR}, vol. abs/1509.07963, 2015.
  [Online]. Available: \url{http://arxiv.org/abs/1509.07963}
\BIBentrySTDinterwordspacing

\bibitem{TCNS}
\BIBentryALTinterwordspacing
C.~P. {Mediwaththe}, E.~R. {Stephens}, D.~B. {Smith}, and A.~{Mahanti},
  ``{Competitive Energy Trading Framework for Demand-side Management in
  Neighborhood Area Networks},'' \emph{ArXiv e-prints}, Dec. 2015. [Online].
  Available: \url{http://adsabs.harvard.edu/abs/2015arXiv151203440M}
\BIBentrySTDinterwordspacing

\bibitem{prelec}
D.~Prelec, ``The probability weighting function,'' \emph{Econometrica},
  vol.~66, no.~3, pp. 497--527, May 1998.

\bibitem{gametheoryessentials}
L.~Brown and Y.~Shoham, \emph{Essentials of game theory}, 1st~ed.\hskip 1em
  plus 0.5em minus 0.4em\relax Morgan and Claypool, 2008, pp. 10--11.

\bibitem{WPNdata}
\BIBentryALTinterwordspacing
B.~Jones, N.~Wilmot, and A.~Lark, ``Study on the impact of photovoltaic ({PV})
  generation on peak demand,'' Western Power, Australia, Tech. Rep., April
  2012. [Online]. Available:
  \url{http://www.westernpower.com.au}
\BIBentrySTDinterwordspacing

\bibitem{ausgrid}
\BIBentryALTinterwordspacing
Ausgrid, ``Network price list 2014-2015,'' 2014. [Online]. Available:
  \url{http://www.ausgrid.com.au}
\BIBentrySTDinterwordspacing

\bibitem{Adika}
C.~O. Adika and L.~Wang, ``Non-cooperative decentralized charging of
  homogeneous households' batteries in a smart grid,'' \emph{IEEE Trans. Smart
  Grid}, vol.~5, no.~4, pp. 1855--1863, July 2014.

\bibitem{Peng}
P.~Yang, G.~Tang, and A.~Nehorai, ``A game-theoretic approach for optimal
  time-of-use electricity pricing,'' \emph{IEEE Trans. Power Sys.}, vol.~28,
  no.~2, pp. 884--892, May 2013.

\end{thebibliography}
\end{document}